\definecolor{blue}{rgb}{0,0,1}
\newcommand{\ie}    {i.e., }
\newcommand{\wrt}   {w.r.t.\ }
\def\L{\mathcal{L}}
\def\T{\mathcal{T}}
\newcommand{\defeq}{\stackrel{\mbox{{\tiny\rm df}}}{=}}
\newcommand{\channel} {\mbox{{\sc Ch}}}
\newcommand{\var} {\mbox{{\tt Vars}}}
\newcommand{\CLines} {\mbox{\tt CLines}}
\newcommand{\hosts} {\mbox{\tt Hosts}}
\newcommand{\instances} {\mbox{\tt VMs}}
\newcommand{\addr}   {\mbox{\tt CAddr}}
\newcommand{\INT} {\mbox{\sc Ints}}
\newcommand{\true}   {\mbox{\tt true}}
\newcommand{\false}   {\mbox{\tt false}}
\newcommand{\iread}   {\mbox{\tt cread}}
\newcommand{\sleep}   {\mbox{\tt SLEEP}}
\newcommand{\move}   {\mathtt{MOVE}}
\newcommand{\istop}   {\mathtt{STOP}}
\newcommand{\iskip}   {\mathtt{SKIP}}
\newcommand{\seccloud}   {\mbox{{\tt CSP$_{4v}$ }}}
\newcommand{\tsub} {\mbox{{\sc TSub }}}
\newcommand{\tstop} {\mbox{{\sc TStop }}}
\newcommand{\tskip} {\mbox{{\sc TSkip }}}
\newcommand{\tass} {\mbox{{\sc TAssign }}}
\newcommand{\tbranch} {\mbox{{\sc TBranch }}}
\newcommand{\tloop} {\mbox{{\sc TLoop }}}
\newcommand{\tmove} {\mbox{{\sc TMove }}}
\newcommand{\tseq} {\mbox{{\sc TSeq }}}
\newcommand{\tsend} {\mbox{{\sc TSend }}}
\newcommand{\trecv} {\mbox{{\sc TRecv }}}
\newcommand{\tpar} {\mbox{{\sc TPar }}}
\newcommand{\keywords}[1]{\par\addvspace\baselineskip
\noindent\keywordname\enspace\ignorespaces#1}
\begin{document}
\mainmatter

\title{Analysing Flow Security Properties\\ 
in Virtualised Computing Systems}

\titlerunning{Analysing Flow Security Properties in Virtualised Computing Systems}%

\author{Chunyan Mu}
\authorrunning{C. Mu}
\institute{
   Department of Computer Science, 
   Teesside University, U.K. \\
   \mailsa
}

\maketitle
 
\begin{abstract}
This paper studies the problem of reasoning about flow security properties in virtualised computing networks with mobility from perspective of formal language. We propose a distributed process algebra \seccloud with security labelled processes for the purpose of formal modelling of virtualised computing systems. Specifically, information leakage can come from observations on process executions, communications and from cache side channels in the virtualised environment. We describe a cache flow policy to identify such flows. A type system of the language is presented to enforce the flow policy and control the leakage introduced by observing behaviours of communicating processes and behaviours of virtual machine (VM) instances during accessing shared memory cache.
\end{abstract}

\keywords
{
information flow control,
language-based security, \\
cache non-interference,
virtualised computing systems.
}

\section{Introduction}
\label{sec:intro}

Cloud computing has attracted interests in both the scientific and the industrial computing communities
due to its ability to provide flexible, configurable, and cost-effective computing resources delivered
over the internet.
In cloud systems, computing resources are shared among multiple clients.
This is achieved by virtualisation in which a collection of virtual machines (VMs) are running upon the same platform under the management of the hypervisor. 
Virtualisation 
allows computing service providers to maximise the utilisation of the devices and minimise the costs by creating multiple VMs over a shared physical infrastructure.
However, such services can also bring additional channel threat,
and introduce information leakage between unrelated entities 
during the procedure of \emph{resource sharing} and \emph{communications} 
through unintended covert channels.
To address this concern, this paper proposes to develop formal approaches to specifying, 
modelling and analysing flow security properties
in virtualised computing networks, 
which are the key underpin of contemporary society such as cloud computing.

Specifically, information leakage can come from observations on 
both program executions and cache usage in the virtualised environment. 
On the one hand,
for processes running upon a particular VM instance, 
consider the processes are communication channels,
sensitive inputs can be partially induced by observing public outputs of 
the processes regarding choices of public inputs.
On the other hand,
shared caches enable competing VM instances to extract sensitive information from each other.
CPU cache, a small section of memory built in the CPU for fast memory access, is one of the highest-rate measurable resources shared by multiple processes~\cite{WuXW12}.
Therefore cache-based side-channel attack becomes one of the major attack on VMs and receives the most attention in the cloud environment~\cite{RistenpartTSS09}.
Consider a malicious VM repeatedly access shared cache memory to perform cache-based side-channel attacks.
Such attacks allow one virtual machine to effectively steal secrets of another hosted machine in the same cloud environment.
More precisely, 
for processes running upon different VM instances, 
cache usage can be considered as a communication channel.
Consider the cache lines accessed by the victim instance and by the malicious instance 
as high level input and low level input respectively,
by observing the usage (such as time) of victim cache lines (low output),
the malicious observer can learn some information of the victim instance.
In summary, this paper considers distributed virtualised computing environment 
where the attacker VM steals the information from the target one 
by observing executions of victim processes, 
and by probing and measuring the usage (timing) of the shared cache.

In particular, we develop an approach from the software language-based level 
to enforce the applications to access shared cache and 
to bring interferences in a predictable way. 
As a result, we aim to prevent the leakage introduced by 
such cache timing channel and the interference between security objects caused by executions. 
First, we propose a CSP-like language for modelling communicating processes 
running upon VMs with mobility in the computing environment. 
Second, we formalise a cache flow policy to specify the security condition 
regarding the threat model we focus on.
Finally, we describe a type system of the language 
to enforce the flow policy and control the leakage 
introduced by observing the system behaviours.
More specifically, we give an identifying label to each VM instance,
and partition the cache into a set of page-sized blocks, 
each block in pages is mapped to a set of cache lines and 
is dynamically granted the label of its owner (the VM instance it is allocated to). 
The label of the block will be revoked when its owner terminates.
In addition, programming variables and cache lines are assigned security labels 
to denote the security levels of the data they store.
VM instances and hosts are assigned to different categories 
for the purpose of controlling information flow 
introduced by processes and VM instances movements.
When a guest VM is launched,
the VM manager allocates cache pages to it regarding its requests. 
During the running time of the VM process, 
operation of cache lines is allowed only if it satisfies the specified flow policy.
Processes are allowed to move from one VM to another,
and VM instances are allowed to move from one host to another.
However, movements from a lower-ordered category carrier to a higher one are not allowed.
Furthermore, we only allow the communicating process access the cache within a certain time
in order to prevent timing leaks by observing the usage of the cache.
These regulations are enforced by the semantics and the typing rules we formalised. 
When the guest VM terminates, the relevant cache pages are initialised, 
and released to the VM manager to prevent flushing of the pages by a malicious VM.
%

\paragraph{Outline.}
This paper is organised as follows. 
Section~\ref{sec:lang} describes our language for modelling basic virualised computing networks.
Section~\ref{sec:attacker} describes the threat model we focus on.
Section~\ref{sec:policy} specifies cache flow policy which the processes should satisfy in our model.
Section~\ref{sec:type} presents a type system to enforce the cache flow policy.
Section~\ref{sec:related} briefly reviews literature in the related areas.
\section{The Modelling Language \seccloud}
\label{sec:lang}
This section presents a dialect of communicating sequential processes (CSP)~\cite{Hoare78} language 
\seccloud for formal modelling of and reasoning about virtualised computing network systems considered in this paper. 
Such an environment can be considered as a distribute computing system, 
in which a group of inter-connected and virtualised computers 
are dynamically allocated for serving.
Processes can move from one VM to another and communicate to each other 
via sending/receiving messages.
Applications and data are stored and processed in the network 
but can be accessed from any location using a client.
It is natural to specify and describe the system as a set of communicating processes
in a network with consideration of resource sharing 
in a predictable way.
A CSP-like language is therefore a good choice for 
modelling of such systems.

\subsection{Terminology and notation}
We consider the infrastructure consists of a set of \emph{virtual private networks} (VPNs)
upon which a set of \emph{virtual machines} (VMs) can communicate with each other.
A VPN may include one or more VMs, 
and the location of VMs can be viewed as a node (host) of the VPN it belongs to.
An \emph{instance} is a VM 
upon which a number of \emph{processes} locate and run.
Let $I$ denote a set of instances, $I = \{I, I_1, \dots, I_n\}$.                                                                                                                                                          
Processes ($P, P_1, \dots$) can be constructed from a set of atomic actions called 
\emph{events} ($e, e_1, \dots$),
or be composed using operators to create more complex behaviours.
The full set of actions that a system may perform is called the \emph{alphabet} ($\Sigma$).
The operators are required to obey algebraic laws which can be used for formal reasoning.
The interactions carrying data values between processes take place through ``channels''.
From an information theory point of view, 
a storage device such as cache which can be received from (reading) and sent to (writing)
is also a kind of communication channel.
%
CPU data caches locate between the processor cores and the main memory.
We assume the clients (including the attacker) know the map between memory locations and cache sets,
so we omit the details of the mapping and focus on cache organisation and operation here.
Cache can be viewed as a set of cache lines:
$\CLines = \{l_i | 0 \le i \le n\}$ where $n$ denotes the size of the cache.
%

In addition, in order to encode the desired features of the language for flow secure 
virtualised computing systems, 
we assign security labels into variables and cache lines (channels),
and allocate cache lines into instances via mappings: 
\[ 
\tau_{v}: \var \mapsto_{\tau_v} \L, \
\tau_{c}: \CLines \mapsto_{\tau_c} \L, \
\alpha_{c}: \CLines \mapsto_{\alpha_c} I 
\]
where $\L$ denotes a security lattice. 
We write $\tau$ in stead $\tau_{v}$ and $\tau_{c}$ in general 
for cases without introducing any confusion.
Furthermore, we consider VM hosts are assigned to different categories $\Omega_H$, 
with an ordering of subset relations:
\[\beta_h: \hosts \mapsto_{\beta_h} \Omega_H.\]
For instance, $h_1$ is assigned to category: $\Omega_1=\{\texttt{student}\}$,
$h_2$ is assigned to category: $\Omega_2=\{\texttt{student, staff}\}$,
and thus $h_1 \sqsubseteq h_2$ since $\Omega_1 \subseteq \Omega_2$.
Similarly, we also assign VM instances into different sub-categories $\Omega_I$,
with an ordering of subset relations:
\[\beta_i: \instances \mapsto_{\beta_i} \Omega_I.\]
For instance, $P$ running upon VM $i_1$ belonging to sub-category: 
{$\omega_1 = \{\texttt{UG-1}\} \subseteq \texttt{student}$},
$Q$ running upon VM $i_2$ belonging to sub-category:
{$\omega_2 = \{ \texttt{UG-1, UG-2} \} \subseteq \texttt{student}$}, 
and $\omega_1 \subseteq \omega_2$, so $i_1 \sqsubseteq i_2$.

\subsection{Syntax}
Table~\ref{tbl:syntax} presents the syntax of the language \seccloud.
\begin{table}
\begin{center}
\begin{tabular}{ll}
\hline 
Expressions & ~~~ $\exp ::= \var \mid \channel \mid \INT \mid  \exp \oplus \exp$ 
	    \scriptsize{$~~~ (\oplus \in \{+,-,*,/\})$}
\\ 
Boolean expresssions & ~~~ $b ::= \true \mid \neg b \mid b \land b \mid \exp \bowtie \exp$ 
	    \scriptsize{$~~~ (\bowtie \in \{>,\ge, <, \le, ==\})$}
\\ 
Process & ~~~ 
$x := \exp \mid \istop \mid \iskip \mid \sleep(\Delta t) \mid$ 
\\
&~~~ ~~~ ~~~
$\move_P(i)\mid P;Q \mid P \lhd b \rhd Q  \mid b \rhd (P)^* ~|$
\\
&~~~ ~~~ ~~~
$ a!w \mid a?x \mid  P \Vert Q$
\\
VM instances  & ~~~ $I ::= [\![P]\!] \mid \move_I(h) \mid I \Vert I'$ %
\\
Hosts & ~~~ $H ::= i:[\![ I ]\!].M_I  \mid H \Vert H'$
\\ 
VPN networks & ~~~ $G ::= h:[\![H]\!] \mid G \Vert G'$
\\
\hline
\end{tabular}
\end{center}
\caption{Syntax of \seccloud}
\label{tbl:syntax}
\end{table}
Expressions can be variables, channels, integers and arithmetic operations (denoted by $\oplus$) between expressions.

%
Operator $x:=\exp$ assigns the value of $\exp$ to process variable $x$.
Action operator $\istop$ denotes the inactive processes that does nothing 
and indicates a failure to terminate,
and delayable operator $\sleep(\Delta t)$ allows the process to do nothing and wait for $\Delta t$ time units.
Moving operator $\move_P (i)$ allows a process $P$ to be able to move from current VM to another one $i$. 
We require that VM processes running upon on a VM instance 
with lower (category) order are not allowed to move to a VM instance with a higher (category) order.
Operator $P;Q$ denotes the sequential composition of processes $P$ and $Q$.
Branch operator $P \lhd b \rhd Q$ defines if the boolean expression $b$ is true
then behaves like $P$ otherwise behaves like $Q$.
Sending operator $a!(w)$ will output a value in expression $w$ over channel $a$
to an agent, and receiving operator $a?x$ allows us to input a data value during an interaction 
over channel $a$ and write it into variable $x$ of an agent.
$P \Vert Q$ denotes the synchronous parallel composition of processes $P$ and $Q$.
Loop operator $b \rhd (P)^*$ denotes the loop operation of process $P$ while $b$ is true. 

An instance $I$ is a virtual machine (VM) hosted on a network infrastructure.
Operator $ [\![P]\!]$ defines the VM upon which process $P$ runs. 
Operator $\move_I (h)$ allows VM instances (with all processes running on it) to migrate
from current host machine to another host $h$ to keep the instance running even when an event, 
such as infrastructure upgrade or hardware failure, occurs.
Similarly to the process movement, 
we require that VM instances running upon on a host
with lower (category) order are not allowed to move to a host with a higher (category) order.
So, in the previous example, instances running upon $h_1$ are not allowed to moved to $h_2$,
and $P$ is not allowed to move to VM $i_2$.
Operator $ i:[\![ I ]\!].M_I$ defines the host machine upon which $I$ runs, 
$M_I$ denotes the cache pages allocated to instance $I$.
To ensure that no cache is shared among different VM instances,
we require that for any host $h$ upon which any $I_1$ and $I_2$ are running: 
$I_1 \ne I_2 \Rightarrow M_{I_1} \cap M_{I_2} = \emptyset$,
and if an instance $I$ terminates, then set $M_I$ to be $\emptyset$ for future allocation to other instances.
Virtual private network provides connectivity for VM hosts. 
It can be viewed as a virtual network consisting of a set of hosts
where VM instances 
can run and communicate with each other.
The location of a host $h: [\![H]\!]$ indicates a network node of G
in which the VM host machine locates.
%

\subsection{Operational semantics}
In order to incorporate as much parallel executions of events within different nodes as possible, 
we transform the network $G$ into a finite parallel compositions of the form:
\begin{eqnarray*}
G & = & h_1:i_{11}:[\![P_{11}]\!].M_{11} ~\Vert ~ \dots ~\Vert ~ h_1:i_{1j_1}:[\![P_{1j_1}]\!].M_{1j_1} \\
&\Vert & h_2:i_{21}:[\![P_{21}]\!].M_{21} ~\Vert ~ \dots ~\Vert ~ h_2:i_{2j_2}:[\![P_{2j_2}]\!].M_{2j_2} \\
&\Vert & \dots ~~ \dots ~~ \dots ~~\\
&\Vert & h_m:i_{m1}:[\![P_{m1}]\!].M_{m1} ~\Vert ~ \dots ~\Vert ~ h_m:i_{mj_m}:[\![P_{mj_m}]\!].M_{mj_m}
\end{eqnarray*}
where $h_k$ denotes the identifier of a host, $i_{kj}$ denotes the VM instance located at $h_k$. Each component $h_k:i_{kj}:[\![P_{kj}]\!].M_{kj}$ is considered as a \emph{decomposition} of $G$.
We argue such decomposition is well-defined by applying the following rules of structural equivalence:
\begin{eqnarray*}
G ~\Vert~ G' &\equiv& G' ~\Vert~ G \\
(G ~\Vert ~ G') ~\Vert~ G'' &\equiv& G ~\Vert ~ (G' ~\Vert~ G'') \\
h: [\![I ~\Vert ~ I']\!] &\equiv&  h: [\![I' ~\Vert ~ I]\!]  \equiv h: [\![I ]\!]~\Vert ~ h:[\![ I']\!] \\
h: [\![(I ~\Vert ~ I') ~\Vert ~ I'']\!] &\equiv&  h:  [\![I ~\Vert ~ (I' ~\Vert ~ I'')]\!] \\
i: [\![P ~\Vert ~ P']\!].M_i &\equiv&  i: [\![P' ~\Vert ~ P]\!].M_i  \equiv i: [\![P ]\!].M_i ~\Vert ~ i:[\![ P']\!].M_i \\
i: [\![(P ~\Vert ~ P') ~\Vert ~ P'']\!].M_i &\equiv&  i:  [\![P ~\Vert ~ (P' ~\Vert ~ P'')]\!].M_i
\end{eqnarray*}
We now define the operational semantics of \seccloud in terms of multiset labelled transition system
$\langle \vec{\Gamma}, \Sigma, \Longrightarrow \rangle$, where:
\begin{itemize}
\item $\vec{\Gamma}$ is a vector of configurations of a VPN $G$ regarding the vector of decomposition of $G$. 
	A configuration $\Gamma$, regarding a single component (say process $P$) of the decompositions of $G$, is defined as a tuple $(\sigma, \delta, I, H)$, where:
	\begin{itemize}
	\item $\sigma: \var_P \mapsto \INT$ denotes the store;
	\item $\delta: \addr_P \mapsto (\INT \cup \{\emptyset\})$ defines the possible world regarding cache; 
	\item $I$ specifies the owner (the VM instance identifier) of process $P$;
	\item $H$ specifies the host (location) of the VM instance of process $P$.
	\end{itemize}
\item $\Sigma$ is a set of operating events which the processes can perform;
\item  $\Longrightarrow \subseteq \vec{\Gamma} \times \vec{\Sigma} \times \vec{\Gamma}$ is the multiset transition relations: $\vec{\Sigma}$ denotes a vector of operating events for the vector of components.
\end{itemize}

The action rules of the operational semantics of \seccloud is presented in Table~\ref{tbl:semantics}.
\begin{table}[!h]
\begin{center}
\scalebox{0.9}{\begin{tabular}{ll}
\hline \\
Store & $S ::= \{\} \mid \{\var \mapsto \INT\} \cup S$
\\
Cache & $M ::= \{\} \mid \{\addr \mapsto (\INT \cup \{\emptyset\}) \} \cup M$
\\ \\
Stop &  
$
\langle \istop, \Gamma \rangle \longrightarrow \Gamma \lbrack \delta(\addr_P) = \emptyset \rbrack
$
\\ \\
Skip &
$
\langle \iskip, \Gamma \rangle \longrightarrow \Gamma
$
\\  \\
Sleep &  
$
\inference{\Delta t>0}
{\langle \sleep(\Delta t), \Gamma \rangle \longrightarrow 
 \langle \sleep(\Delta t-1), \Gamma \rangle}
 ~~~
\inference{\Delta t=0}
 {\langle \sleep(\Delta t), \Gamma \rangle \longrightarrow \Gamma}
$
\\  \\
Assignment &  
$
\inference{\Gamma \vdash \exp \Downarrow v}
{\langle x:=\exp, \Gamma \rangle \longrightarrow 
\Gamma \lbrack \sigma(x)=v \rbrack}
$
\\  \\
Move &  
$
\inference{\Gamma \vdash I \Downarrow i}
{\langle \move_P(i'), \Gamma \rangle \longrightarrow 
\Gamma \lbrack I=i'\rbrack}
$
~~~
$
\inference{\Gamma \vdash H \Downarrow h}
{\langle \move_I(h'), \Gamma \rangle \longrightarrow 
\Gamma \lbrack H=h'\rbrack}
$
\\  \\
Seq & 
$
\inference{\langle P, \Gamma  \rangle \longrightarrow \langle P',\Gamma'  \rangle}
{\langle P;Q, \Gamma \rangle \longrightarrow \langle P';Q,\Gamma'  \rangle}
~~~~~~~~
\inference{\langle P, \Gamma \rangle \longrightarrow \Gamma' }
{\langle P;Q, \Gamma \rangle \longrightarrow \langle Q,\Gamma'  \rangle}
$
\\ \\
Branch &
$
\inference{
\Gamma \vdash b \Downarrow \true
}
{\langle P \lhd b \rhd Q, \Gamma \rangle 
\longrightarrow 
\langle P, \Gamma \rangle
}
~~~~~~
\inference{
\Gamma \vdash b \Downarrow \false
}
{\langle P \lhd b \rhd Q, \Gamma \rangle 
\longrightarrow 
\langle Q, \Gamma \rangle
}
$
\\  \\
Send & 
$\inference{
  \Gamma \vdash \sigma(w) \Downarrow v ~~ w \Leftarrow \addr_a
}
{
  \langle a!(w), \Gamma \rangle
  \longrightarrow
  \Gamma \lbrack \delta(\addr_a)=v \rbrack 
}$
\\ \\
Recv & 
$\inference{
	a \Rightarrow \addr_a ~~
  \Gamma \vdash \delta(\addr_a) \Downarrow v 
}
{
  \langle a?x, \Gamma \rangle
  \longrightarrow
  \Gamma \lbrack \sigma(x)=v \rbrack 
}
$
\\ \\
Loop & 
$\langle b \rhd (P)^*, \Gamma \rangle \longrightarrow 
\langle (P; b \rhd (P)^*) \lhd b \rhd \iskip, \Gamma \rangle$
\\ \\
Parallel & 
$
\inference{P \longrightarrow P'}
{P \Vert Q \longrightarrow P' \Vert Q}
~
\inference{I \longrightarrow I'}
{I \Vert I'' \longrightarrow I' \Vert I''}
~
\inference{H \longrightarrow H'}
{H \Vert H'' \longrightarrow H' \Vert H''}
~
\inference{G \longrightarrow G'}
{G \Vert G'' \longrightarrow G' \Vert G''}
$
\\  \\
\hline
\end{tabular}}
\end{center}
\caption{Operational semantics of \seccloud}
\label{tbl:semantics}
\end{table}
Notations $\Rightarrow$, $\Leftarrow$, $\Downarrow$ denote cache addressing, 
cache allocation, and evaluation respectively.
For instance, $\Gamma \vdash \exp \Downarrow v$ means that
under configuration $\Gamma$, $\exp$ evaluates to value $v$,
$w \Leftarrow \addr_a$ means the cache allocated for expression $w$ locates at $\addr_a$,
and $a \Rightarrow \addr_a$ means cache address of channel $a$ is $\addr_a$,
notation $\addr_P$ is used to denote the cache addresses allocated for $P$.

Store is defined as a mapping from variables to values, 
\ie $\sigma: \var \mapsto \INT$.
Cache is considered as a mapping from addresses (of cache lines) to integers (cached) or $\emptyset$ (flushed), 
\ie $\delta: \addr \mapsto (\INT \cup \{\emptyset\})$.

Action rule of assignment $\langle x:=\exp, \Gamma \rangle$ updates the configuration such that the state of $x$ is the value $v$ of expression $\exp$ after the execution.
Action rule of process moving operator $\move_P (i')$ updates the configuration such that the identifier of instance accommodating $P$ turns to be $i'$ after the execution of the process movement. 
Similar action rule is applied for VM instance movement from one host to another.
Action rule of sending operator $a!(w)$ updates the configuration such that the value stored in cache address ($\addr_a$) is $v$, 
if expression $w$ evaluates to $v$ under configuration before the execution and the address of cache allocated for communicating channel $a$ (to store expression $w$) is $\addr_a$.
Rule of receiving operator $a?x$ updates the configuration such that the state of variable $x$ is $v$,
if the value stored in cache address $\addr_a$ of the communicating channel is $v$ under the configuration before receiving the data.
In the cross-VM communications over today's common virtualised platforms, 
the cache transmission scheme requires the sender and receiver could only communicate by interleaving their executions for security concerns. 
In order to capture timing behaviour of cache-related operations,
we consider the cache-related operations, such as communicating, 
as time-sensitive behaviours whose lasting time is recorded. 
The events are therefore considered as 
either non-delayable (time-insensitive, time does not progress) actions or 
delayable (time-sensitive, involves passing of time $\Delta t$) behaviour.
We use $\Delta t(e)$ to denote the time duration of event $e$ lasting for.

The behaviour of a process component can now be viewed as a set of sequences of \emph{timed runs}.

\begin{definition}[Timed run]
A timed run of a component of $G$ is a sequence of timed configuration event pairs
leading to a final configuration:
\[
\lambda = \langle \Gamma_0, (e_0, \Delta t_0)\rangle \to 
	    \dots \to 
	    \langle \Gamma_n, (e_n, \Delta t_n)\rangle \to
	    \Gamma
\]
where,
$\Gamma_0$ and $\Gamma$ denote the initial and final configuration respectively.
$\forall i \in \{ 0,n \}$, 
$e_i$ is an event will take place with passing time $\Delta t_i$ under configuration $\Gamma_i$;
if $\Delta t_i=0$, $e_i$ is considered as an immediate event,
if $\Delta t_i>0$, $e_i$ is considered as a time-sensitive event with lasting time $\Delta t_i$.
\end{definition}

\section{Attacker Model}
\label{sec:attacker}
We consider computing environment where malicious tenants can use observations 
on process executions and on the usage of shared cache to induce information about victim tenants.
We assume the service provider and the applications running on the victim's VM are trusted.
Let us consider the attacker owns a VM and runs a program on the system, 
and the victim is a co-resident VM that shares the host machine with the attacker VM.
In particular, 
there are two ways in which an attacker may learn secrets from a victim process:
by probing the caches set and measuring the time to access the cache line (through the cache timing channel),
and by observing how its own executions are influenced by the executions of victim processes.

\paragraph{Cache timing side channel (\textbf{C1}).}
In the virtualised computing environment, 
different VMs launch on the same CPU core, 
CPU cache can be shared between the malicious VM and the target ones. 
Consider the malicious VM program repeatedly accesses and monitors the shared cache
to learn information about the sensitive input of the victim VM.
Specifically, timings can be observed from caches and are the most common side channels 
through which the attacker can induce the sensitive information of the victim VM.
Intuitively, such cache side channel can be viewed as a communication channel,
the victim and the attacker can be viewed as the sender and the receiver respectively.
We assume that the attacker knows the map between memory locations and cache sets,
and is able to perform repeated measurements to record when the victim process accesses the target cache line.
We focus on cache timing side channel, 
all other side-channels are outside our scope.
\begin{example}
\label{eg:lang}
Consider the scenario presented in Fig.~\ref{fig:eg}, 
where $\mathtt{VM}_1$ (victim VM, labelled $i_1$) and $\mathtt{VM}_2$ (attacker VM, labelled $i_2$) be two instances running upon $\mathtt{Host}_1$ (labelled $h_1$); 
victim processes $P$ and $Q$, running over $\mathtt{VM}_1$, are communicating to each other:
$P$ generates a key and sends it to $Q$,
$Q$ encrypts a message using the received key and sends the encrypted message to $P$,
$P$ receives the message and decrypts it;
and attacker process $R$, running over $\mathtt{VM}_2$, 
keeps probing cache address of the key.
 \begin{figure}
\centering
\includegraphics[scale=0.4]{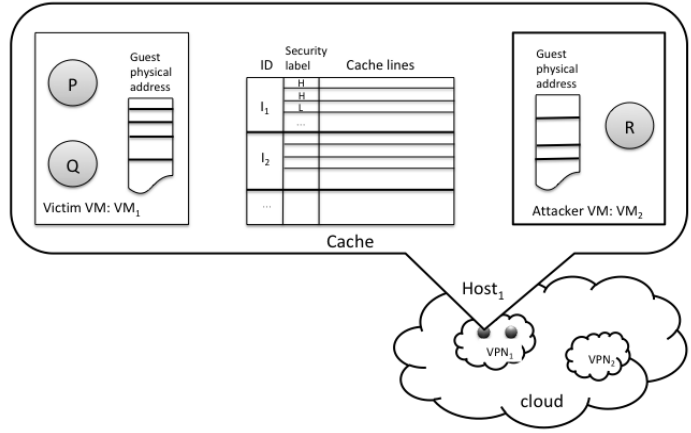}
\caption{Composition structure of the example in networks}
\label{fig:eg}
\end{figure}
Let $\mathtt{keyGen}$, $\mathtt{encrypt}$ and $\mathtt{decrypt}$ denote 
 the function of generating a key, encryption and decryption respectively,
 and assume $\iread(\addr)$ is a function probes cache address $\addr$: returns 1 if it is available and returns $-1$ otherwise.
We present the model in our language as follows.
 \begin{eqnarray*}
  \mathtt{VPC}_1 &\defeq& h_1:[\![\mathtt{Host}_1 ]\!] ~\Vert ~ h_2:[\![\mathtt{Host}_2]\!]\\
  \mathtt{Host}_1 &\defeq& i_1:[\![\mathtt{VM}_1]\!].M_{\mathtt{VM}_1} ~\Vert~ i_2:[\![\mathtt{VM}_2]\!].M_{\mathtt{VM}_2}\\
  \mathtt{VM}_1 &\defeq& [\![P \Vert Q]\!]  ~~~~~~~~
  \mathtt{VM}_2 \defeq  [\![R]\!]\\
  P &\defeq&  x:=\mathtt{keyGen}() ;~~ 
  										\mathtt{key}!x \to P'\\
	Q &\defeq& \mathtt{key}? y \to Q' \\
  Q' &\defeq&  m_1:=\mathtt{encrypt}(y, \text{``message''}) ; ~~
  						\mathtt{msg}!m_1 \to \istop		\\
   P' &\defeq& \mathtt{msg}?m_2 \to P'' \\
  P'' &\defeq& m := \mathtt{decrypt}(x, m_2) \to \istop\\
  R &\defeq& \true \rhd (z:=\iread(\addr_{\mathtt{key}}))^*
 \end{eqnarray*}
The communication time between $P$ and $Q$ is affected by the value of the key generated.
There is information flow from victim VM to malicious VM through cache side channel.
\end{example}

\paragraph{Leakage through observations on process executions (\textbf{C2}).}
Next let us focus on processes running upon a particular VM instance. 
Consider the VM instance, upon which processes are running, as a communication channel having inputs and outputs, 
which is relative to information flow from victim user to a malicious one.
The victim user controls a set of higher level inputs with confidential data.
The attacker is an observer who may have control of lower level inputs.
He has partial observation on executions of the process, 
but does not have any access to the confidential data. 
Specifically, the weak attacker can observe the public result, 
\ie the final public output of the programs, 
while the strong attacker can observe the low state after each execution step of the processes. 
We consider a process has access to confidential data via higher level inputs. 
The attacker tries to collect and deduce some of the secure information 
about higher level inputs by varying his inputs and observing the execution of the process.
\begin{example}
\label{eg:lang2}
Consider process $P$ and $R$ are running upon a VM instance $\mathtt{VM}$. Process $P$ inputs a password through channel $\mathtt{pwd}$ into $H$-level variable $x$, and updates $L$-level variable $y$ to be $1$ if $x$ is odd and to be $0$ if $x$ is even. Process $R$ output variable $y$ through channel $\mathtt{res}$:
 \begin{eqnarray*}
 \mathtt{VM} &\defeq& [\![P \Vert R]\!]\\
 P &\defeq& \mathtt{pwd}? x;~~ y:=1 \lhd (x\mod 2 == 1) \rhd y:=0 \to \istop \\
 R &\defeq& \mathtt{res}! (y)  \to \istop
 \end{eqnarray*}
Assume $L \sqsubset H \in \L$. Clearly there are implicit flows from $x$ to $y$ by observing $L$-level output of the process.
\end{example}
\section{Information Flow Policy}
\label{sec:policy}

Information flow is controlled by means of security labels and flow policy integrated in the language. 
Each of the identifiers, information container, is associated with a security label.
Identifiers can refer to variables, communication channels, 
and can refer to entities such as files, devices in a concrete level.
The set of the security labels forms a security lattice regarding their ordering.
We study the system flow policy which prevents information flow leakage
from high-level objects to lower levels 
and from a target instance to a malicious one 
via observing process executions and 
cache usage (by measuring the time of accessing cache lines during communications).

In general, information flow policies are proposed to ensure that secret information does not influence publicly observable information. 
An ideal flow policy called Non-interference (NI)~\cite{GoguenMeseguer82} 
is a guarantee that no information about the sensitive inputs can be obtained 
by observing a program's public outputs, 
for any choice of its public inputs.
Intuitively, the NI policy requires that
low security users should not be aware of the activity of high security users 
and thus not be able to deduce any information about the behaviours of the high users.
On the one hand,
for processes running upon a particular VM instance (regarding \textbf{C2}), 
the NI policy can be applied 
to control information flow from high-level input to low-level output,
where state of sensitive information container (s.a. high-level variables) and 
observations on behaviours of public information container (s.a. low-level variables)
are viewed as the high input and low output respectively. 
On the other hand, 
for processes running upon different VM instances (regarding \textbf{C1}), 
we adapt the NI policy here in order to 
control the information flow from processes running upon victim instance to malicious one
through cache side channel. 
Consider the cache side channel as a communication channel,
the cache lines accessed by the victim instance and by the malicious instance 
are viewed as high level input and low level input respectively,
and the observations on the victim cache usage (s.a. hits/misses) are considered as low level outputs.
\emph{Cache flow non-interference} demands the changing of the cache lines accessed by the victim process (high inputs) does not affect the public observations on the cache usage (low outputs).
Informally, cache flow interference happens if the usage (we focus on the accessing time) of the cache lines accessed by one victim process affects the usage of the cache lines accessed by attacker VM processes. 
Fig.~\ref{fig:ni4cache} presents some intuition of the cache NI policy discussed above.
 \begin{figure}
 \centering
 \includegraphics[scale=0.55]{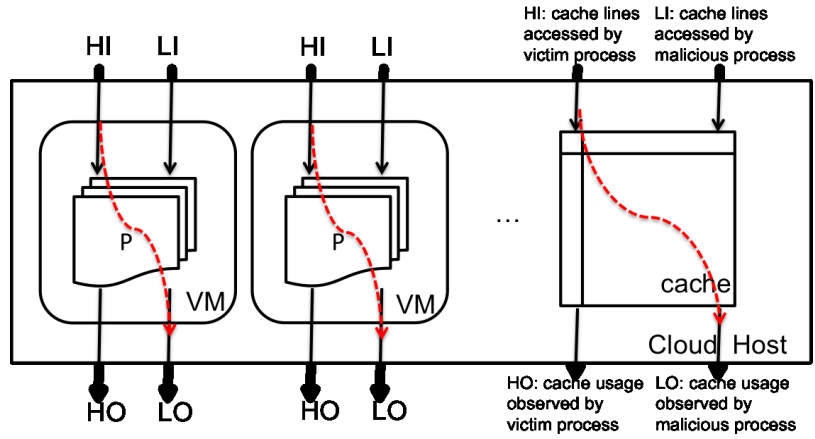}
 \caption{Cache non-interference}
 \label{fig:ni4cache}
 \end{figure}

Formally, the policy of flow non-interference can be considered in terms of 
the equivalence relations on the system behaviours from the observer's view,
including the state evolution of information container and 
the timing behaviour of cache accessing.
This is due to the fact that the system behaviours are modelled as timed runs 
with security classification of identifiers and 
with timing considerations when accessing caches 
during the process communications in our model.
%
\begin{definition}[Flow security environment]
\label{def:sec-env}
Let $\mathcal L$ be a finite flow lattice, 
$\sqsubseteq$ denote the ordering relation of $\mathcal L$,
$I$ denote the set of VM instances running upon any host $h$,
$\Omega_H$ and $\Omega_I$ denote a set of categories for hosts
and sub-categories for instances respectively.
The flow security environment is considered as:
\[
\Xi: (\tau_v, \tau_c, \alpha_c, \beta_h, \beta_i),
\]
where $\tau_v: \var \mapsto_{\tau_v} \L$, 
$\tau_c: \CLines \mapsto_{\tau_c} \L$, 
$\alpha_c: \CLines \mapsto_{\alpha_c} I$,
$\beta_h: \hosts \mapsto_{\beta_h} \Omega_H$,
$\beta_i: \instances \mapsto_{\beta_i} \Omega_I$.
Furthermore, we say $\Xi \sqsubseteq \Xi'$ \emph{iff}
$\forall x \in \var$, $l \in \CLines$, $i \in \instances$ and $h \in \hosts$:
\begin{eqnarray*}
&& \Xi(\tau_v(x)) \sqsubseteq \Xi'(\tau_v(x)) ~\land~ 
\Xi(\tau_c(l)) \sqsubseteq \Xi'(\tau_c(l)) \\
& \land &
\Xi(\beta_h(h)) \sqsubseteq \Xi'(\beta_h(h)) ~\land ~
\Xi(\beta_i(i)) \sqsubseteq \Xi'(\beta_i(i)),
\end{eqnarray*}
and for $t \in \L$, $\omega_I \subseteq \Omega_I$,
and $\omega_H \subseteq \Omega_H$,
we say $\Xi \sqsubseteq (t, \omega_I, \omega_H)$ \emph{iff}:
\[
\Xi(\tau_v(x)) \sqsubseteq t ~\land ~ \Xi(\tau_c(l)) \sqsubseteq t
~\land ~ \Xi(\beta_h(h)) \sqsubseteq \omega_H
~\land ~ \Xi(\beta_i(i)) \sqsubseteq \omega_I,
\]
where we abuse notation $\Xi(\tau_v(x))$, $\Xi(\tau_c(l))$, 
$\Xi(\beta_h(h))$ and $\Xi(\beta_i(i))$
to denote $\tau_v(x)$, $\tau_c(l)$, $\beta_h(h)$ and $\beta_i(i)$ 
in environment $\Xi$ respectively.
\end{definition}
\begin{definition}[$(t, \omega_I, \omega_H)$-equivalent configuration $=_{\Xi,(t, \omega_I, \omega_H)}$]
Consider processes running upon hosts of VPN $G$,
let $\Xi$ be a security environment,
$t \in \L$ be a security level,
$\omega_H \subseteq \Omega_H$ be a category and 
$\omega_I \subseteq \Omega_I$ be a sub-category.
For any $x \in \var$, $l \in \CLines$, 
assume $i$ and $h$ are the instance and host which $x, l$ belong to,
we define store $(t, \omega_I, \omega_H)$-equivalence under $\Xi$ as follows:
$\sigma_1(x)=_{\Xi,(t, \omega_I, \omega_H)}\sigma_2(x)$
  \emph{iff}:
\[  ( \Xi(\tau_v(x)) \sqsubseteq t ~\land~ 
\Xi(\beta_i(i))  \sqsubseteq \omega_I ~\land~ 
\Xi(\beta_h(h))  \sqsubseteq \omega_H)
  \Rightarrow
  \sigma_1(x) = \sigma_2(x),
\]
and cache line $(t, \omega_I, \omega_H)$-equivalence under $\Xi$ as follows:
$\delta_1(l)=_{\Xi,(t, \omega_I, \omega_H)}\delta_2(l)$
  \emph{iff}:
\[  (\Xi(\tau_c(l)) \sqsubseteq t  ~\land~ 
\Xi(\beta_i(i))  \sqsubseteq \omega_I ~\land~ 
\Xi(\beta_h(h))  \sqsubseteq \omega_H)
  \Rightarrow
  \delta_1(l) = \delta_2(l).
\]
Furthermore, given two configurations $\Gamma_1 = (\sigma_1, \delta_1, i_1, h_1)$ and 
$\Gamma_2 = (\sigma_2,\delta_2, i_2, h_2)$,
we say $\Gamma_1 =_{\Xi,(t, \omega_I, \omega_H)} \Gamma_2$ \emph{iff}:
\begin{eqnarray*}
&& (\forall x \in \var. \sigma_1 (x) =_{\Xi,(t, \omega_I, \omega_H)} \sigma_2 (x)) 
\land~
(\forall l \in \CLines. \delta_1 (l) =_{\Xi,(t, \omega_I, \omega_H)} \delta_2 (l)) \\
&\land & (\beta_i(i_1) \sqsubseteq \beta_i(i_2)) 
~\land~
(\beta_h(h_1) \sqsubseteq \beta_h(h_2)) .
\end{eqnarray*}
\end{definition}
\begin{definition}[Strong bisimulation 
${\buildrel\rm s\over\sim_{\Xi,(t, \omega_I, \omega_H)}}$ and 
weak bisimulation ${\buildrel\rm w\over\sim_{\Xi,(t, \omega_I, \omega_H)}}$]
\label{def:simulation}
Consider two timed runs running upon host $h$
under security environment $\Xi$: 
\[
\lambda = \langle \Gamma_0, (e_0, \Delta t_0) \rangle \to \dots \to \langle \Gamma_n, (e_n, \Delta t_n) \rangle \to \Gamma
\]
\[
\lambda' = \langle \Gamma'_0, (e'_0, \Delta t'_0) \rangle \to \dots \to \langle \Gamma'_n, (e'_n, \Delta t'_n) \rangle \to \Gamma'
\]
$\forall \Gamma_0, \Gamma'_0$ such that $\Gamma_0 =_{\Xi,(t, \omega_I, \omega_H)} \Gamma'_0$,
we say $\lambda$ and $\lambda'$ are \emph{strong $(t, \omega_I, \omega_H)$-bisimilar} to each other, 
\ie  
$\lambda\ {\buildrel\rm s\over\sim_{\Xi,(t, \omega_I, \omega_H)}}\ \lambda'$, 
iff:
\[
\forall j \in \{0...n\}.(\Gamma_{j} =_{\Xi,(t, \omega_I, \omega_H)} \Gamma'_{j}) \land (\Delta t_j = \Delta t'_j);
\]
and say $\lambda$ and $\lambda'$ are \emph{weak $(t, \omega_I, \omega_H)$-bisimilar} to each other, \ie $\lambda\ {\buildrel\rm w\over\sim_{\Xi,(t, \omega_I, \omega_H)}}\ \lambda'$, iff:
\[
(\Gamma =_{\Xi,(t, \omega_I, \omega_H)} \Gamma') \land (\sum^n_{j=0} \Delta t_j = \sum^n_{j=0} \Delta t'_j).
\]

\end{definition}
\begin{definition}[Cache flow security policy]
\label{def:policy}
Given a security level $L \in \L$, $\omega_H \subseteq \Omega_H$, and $\omega_I \subseteq \Omega_I$
a VPN $G$ under security environment $\Xi$ is considered \emph{strong cache flow secure} 
\emph{iff}: 
\[
\forall \lambda, \lambda' \in \Lambda. 
(\Gamma_0 =_{\Xi, (L, \omega_I, \omega_H)} \Gamma'_0 
\Rightarrow 
\lambda\ {\buildrel\rm s\over\sim_{\Xi,(L, \omega_I, \omega_H)}}\ \lambda'),
\]
where $\Gamma_0$ and $\Gamma'_0$ denote the initial configuration of $\lambda$ and $\lambda'$ respectively,
$\Lambda$ denotes all runs of components of $G$.
Similarly, the definition of \emph{weak cache flow secure} can be given.
\end{definition}

\begin{example}
\label{eg:policy}
Consider the model presented in Example~\ref{eg:lang}. 
Let $\tau_v(x) = \tau_v(y) = H$,
$\tau_v(m_1) = \tau_v(m_2) = M$,
$\tau_v(z) = L$,
and $L \sqsubset M \sqsubset H \in \L$.
Let us assume $\beta_i(\mathtt{VM}_1) = \beta_i(\mathtt{VM}_2) = \omega_I$,
and $\beta_h(\mathtt{Host}_1) = \omega_H$.
Communicating cache channels are thus assigned with security labels regarding the data they transmit:
$\tau_c(\addr_{\mathtt{key}}) = H$,
$\tau_c(\addr_{\mathtt{msg}}) = M$.
Note that the state of variable $z$ depends on the state of cache address of $\mathtt{key}$, and is affected by the communication time for data transmission between $P$ and $Q$.
Therefore the model does not satisfy the cache flow security policy since for any given two runs, both the timing condition and configuration equivalent condition of ${\buildrel\rm w\over\sim_{\Xi,(L, \omega_I, \omega_H)}}$ are not guaranteed to be satisfied.
\end{example}
In order to close the cache timing channel,
we consider the communication as a scenario of sending and receiving processes running in parallel 
with certain time interleaving data transmission scheme:
\[
\inference{
  \Gamma \vdash w \Downarrow v ~~ 
  w \Leftarrow \addr_a ~~ 
  a \Rightarrow \addr_a ~~
  t < T 
}
{
  \begin{array}{l}
   \langle \sleep(T-t) \to P ~\lhd~ (t:=\Delta t(a!w ~\Vert~ a?x) < T) ~\rhd~ \istop, 
	  \Gamma \rangle\\
   ~ \buildrel a(w) \over \longrightarrow ~
   \langle P, \Gamma \lbrack \sigma(x)=v,\delta(\addr_a)=\emptyset \rbrack \rangle 
  \end{array}
}
~~~ (1)
\]
The communicating procedure needs to complete in $T$ time units (together with sleeping time)
and then behaves as $P$; 
the value of $w$ is sent to variable $x$ via channel $a$,
channel $a$ and the relevant cache lines are then tagged as $\tau_c(w)$.
The communication will be considered as failed if $T$ time units have passed but the communication has not completed yet. 
Fixed completion time $T$ prevents the timing leakage introduced by the cache channel communication.
\begin{example}
\label{eg:timing}
Consider the model presented in Example~\ref{eg:lang}, we rewrite the communicating procedure as follows: 
 \begin{eqnarray*}
 && x:=\mathtt{keyGen}() ;\\
 &&  \sleep(5-t) \to P' \lhd (t:=\Delta t(\mathtt{key}!x ~\Vert ~ \mathtt{key}? y) < 5) \rhd \istop\\
&& m_1:=\mathtt{encrypt}(y, \text{``message''}) ; \\
 && \sleep(5-t) \to Q' \lhd (t:=\Delta t(\mathtt{msg}!m_1 ~ \Vert ~ \mathtt{msg}?m_2)<5) \rhd \istop
 \end{eqnarray*}
The timing condition of weak $(t, \omega_I, \omega_H)$-bisimulation specified in Definition~\ref{def:simulation} is now ensured, while the configuration condition is still violated. This will be addressed in next Section.
\end{example}
\section{Flow Security Type System for \seccloud}
\label{sec:type}

In order to make the low observation and cache accessing time of the executions be high input independent,
the variables and cache lines are associated with security labels, 
VMs and hosts are assigned to categories,
rules (semantic + typing) are required to ensure that: 
no information flows to lower level objects,
no cache is shared among different VM instances,
processes (c.f. instances) are not allowed to move from a lower order instance (c.f. host) to a higher one,
and cache related operations in communications between processes are forced to be completed in certain time,
and hence the cache flow policy is enforced.

For a process $P$ (\wrt a component of $G$), we consider the type judgements have the form of:
\[ (\tau, \omega_I, \omega_H) \vdash \Xi \{P\} \Xi'\] 
where the type $(\tau, \omega_I, \omega_H)$ denotes the (environment) \emph{counter security levels} of the communication channel/variables and \emph{and counter categories} of VMs/hosts
participated in the branch events being executed for the purpose of 
eliminating implicit flows from the \emph{guard}.
$\Xi$ and $\Xi'$ describe the type environment which
hold before and after the execution of $P$.
In general, notation: 
\[
\Xi \vdash (\exp:_{\tau_v} t_e, 
 l:_{\tau_c} t_l,
 l:_{\alpha_c} i_l,
 i:_{\beta_i} \omega_I,
 h:_{\beta_h} \omega_H)
\]
describes that under type environment $\Xi$,
expression $\exp$ and (the address of) cache line $l$ 
has type $t_e$ and $t_l$ respectively,
$l$ is allocated to VM instance $i_l$, 
instance $i$ is assigned to a category $\omega_I$,
and host $h$ is assigned to a category $\omega_H$.
The type of an expression including boolean expression is defined by 
taking the least upper bound of the types of its free variables as standard:
\[
\Xi \vdash \exp:_{\tau_v}t ~\text{iff}~ 
t=\sqcup_{x \in \mathtt{fv}(\exp)} \Xi(\tau_v(x)).
\]
All memory, caches and channels written by a t-level expression becomes tagged as t-level.
Let $\var_P$ and $\CLines_P$ denote a set of variables defined in and cache lines allocated to process $P$.
Typing rules for processes with security configuration are presented in Table~\ref{tbl:rules}.
\begin{table}[h!]
\begin{center}
\scalebox{0.85}{\begin{tabular}{ll}
\hline\\
 (\tsub) & ~~~
 $\inference{ \tau_1 \vdash \Xi_1 \{P\} \Xi'_1}
 {\tau_2 \vdash \Xi_2 \{P\} \Xi'_2 } $~~~~
 $\tau_2 \sqsubseteq \tau_1,~ \Xi_2 \sqsubseteq \Xi_1,~ \Xi'_1 \sqsubseteq \Xi'_2$
 \\\\
(\tass) & ~~~
$\inference{\Xi \vdash \exp:_{\tau_v} t}
{ (\tau, \omega_I, \omega_H) \vdash \Xi \{x:=\exp\} \Xi'(x \mapsto_{\tau_v} \tau \sqcup t)}$
\\\\
(\tstop) & ~~~
$(\bot_{\mathcal L},\bot_{\Omega_I}, \bot_{\Omega_H}) \vdash \Xi \{ \istop_P \} 
\Xi'(\{l \mapsto_{\tau_c} \bot \mid \forall l \in \CLines_P\})$
\\\\
(\tskip) & ~~~
$(\bot_{\mathcal L},\bot_{\Omega_I}, \bot_{\Omega_H}) \vdash \Xi \{ \iskip \}\Xi$
\\\\
(\tmove) & ~~~
$\inference{\Xi \vdash (\{x:_{\tau_v}t_x, l:_{\tau_c}t_l, l:_{\alpha_c}i \mid \forall x\in \var_P, l\in \CLines_P\}),~ i:_{\beta_i} \omega,~ i':_{\beta_i} \omega'}
{\begin{array}{l}(\tau, \omega_I, \omega_H) \vdash \Xi \{ \move_P (i') \} \Xi' 
 (\{x\mapsto_{\tau_v}t_x\sqcup\tau, l\mapsto_{\tau_c}t_l\sqcup\tau, l\mapsto_{\alpha_c}i' \cr
 ~~~~~~~~~ ~~~~~~~~~ \mid \forall x\in \var_P, l\in \CLines_P\},~ 
 i \mapsto_{\beta_i} \sqcup\{\omega, \omega', \omega_I\}) 
 \end{array}}$
\\\\
& ~~~
$\inference{
\begin{array}{l}
h: [\![I]\!].M_I ~~~~~ I= \langle i:[\![P_1]\!],~  i:[\![P_2]\!],~ \dots,~  i:[\![P_n]\!] \rangle \cr
h': [\![I']\!].M_{I'} ~~ I'= \langle i':[\![P'_1]\!],~  i':[\![P'_2]\!],~ \dots,~  i':[\![P'_n]\!] \rangle \cr
\Xi \vdash h:_{\beta_h} \omega ~~~~
\Xi \vdash h':_{\beta_h} \omega' \cr
(\tau, \omega_I, \omega_H) \vdash \Xi_1\{\move_{P_1} (i') \}\Xi'_1 ~~~ 
\dots ~~~
(\tau, \omega_I, \omega_H) \vdash \Xi_n\{\move_{P_n} (i') \}\Xi'_n
\end{array}
}
{(\tau, \omega_I, \omega_H) \vdash \langle \Xi_1, \Xi_2, \dots, \Xi_n \rangle \{ \move_I (h') \} 
 \langle \Xi'_1,  \Xi'_2, \dots, \Xi'_n\rangle (h \mapsto \sqcup \{\omega, \omega', \omega_H\})}$
\\\\
(\tseq) & ~~~
$\inference{(\tau, \omega_I, \omega_H) \vdash \Xi \{P\} \Xi' 
~~~~
(\tau, \omega_I, \omega_H) \vdash \Xi' \{Q\} \Xi''}
{ (\tau, \omega_I, \omega_H) \vdash \Xi \{P; Q\} \Xi''} $
\\\\
(\tsend) & ~~~
$\inference{\Xi \vdash w:_{\tau_v} t ~~ w \Leftarrow \addr_a}
{ (\tau, \omega_I, \omega_H) \vdash \Xi \{a!w\} 
  \Xi'(\addr_a \mapsto_{\tau_c} \tau \sqcup t)}$
\\\\
(\trecv) & ~~~
$\inference{\Xi \vdash \addr_a:_{\tau_c} t}
{ (\tau, \omega_I, \omega_H) \vdash \Xi \{a?x\} \Xi'(x \mapsto_{\tau_v} \tau \sqcup t)}$
 \\\\
(\tbranch) & ~~~
$\inference{\begin{array}{l}         
	\Xi \vdash b:_{\tau_v} t \cr 
	(t \sqcup \tau, \omega_I, \omega_H) \vdash \Xi \{P\} \Xi'_P \cr 
	(t \sqcup \tau, \omega_I, \omega_H) \vdash \Xi \{Q\} \Xi'_Q
            \end{array}}
{ (\tau, \omega_I, \omega_H) \vdash \Xi \{P \lhd b \rhd Q\} \Xi'} $ 
~~ $\Xi'=\Xi'_P \sqcup \Xi'_Q$
\\\\
(\tloop) & ~~~
$\inference{\begin{array}{l}
\Xi_i \vdash b:_{\tau_v} t_i \cr
(t_i \sqcup \tau,\omega_I, \omega_H) \vdash \Xi_i \{P\} \Xi'_i ~~ i=0, \dots, n
\end{array}}
{ (\tau, \omega_I, \omega_H) \vdash \Xi \{b \lhd (P)^*\} \Xi'_n} $ 
~~ $\Xi_0=\Xi, ~ \Xi_{i+1} = \Xi'_i \sqcup \Xi, ~ \Xi_{n+1} = \Xi'_n$
\\\\
(\tpar) & ~~~
$\inference{ 
  \begin{array}{l}
    (\tau, \omega_I, \omega_H) \vdash \Xi \{P\} \Xi' ~~~ 
    (\tau, \omega_I, \omega_H) \vdash \Xi \{Q\} \Xi''  \cr
    \Xi' \vdash (\{x :_{\tau_v} t'_x, l:_{\tau_c} t'_l \mid \forall x\in \var, l\in\CLines\}) \cr
    \Xi'' \vdash (\{x :_{\tau_v} t''_x, l:_{\tau_c} t''_l i \mid \forall x\in \var, l\in\CLines\}) 
  \end{array}
}
{(\tau, \omega_I, \omega_H) \vdash \Xi \{P \Vert Q\} 
\Xi'''(\{x\mapsto_{\tau_v} t'_x \sqcup t''_x, 
	l\mapsto_{\tau_c} t'_l\sqcup t''_l 
      \mid \forall x\in \var, l \in \CLines\})
}$ 
\\\\
\hline \\
\end{tabular}}
\end{center}
\caption{Typing rules for processes with security configuration.}
\label{tbl:rules}
\end{table}
Rule \tstop for stopping a process $P$ ensures the cache lines allocated to $P$ to be flushed as empty with a label of system low $\bot$ for future use.
Rule for assignment \tass ensures the type of variable $x$ to be the least upper bound of 
the type of the assigned expression $\exp$ and the counter level $\tau$.
Rule \tmove regarding moving a process from one instance to another  
enforces the type of each variable and cache line involved in process $P$
to be the least upper bound of the type of itself and the counter level $\tau$,
and ensures its owner to be the VM identifier $i'$ where the process is moving to,
and ensures category of the moving-to-instance to be the least upper bound of that of itself and the moving-from-instance;
rule \tmove regarding moving an instance from one host $h$ to another $h'$ essentially moves all processes running upon the instance to $h'$ such that each of which follows the rule of process moving.
Rule for composition \tseq ensures the typing environment in terms of the sequential event in a compositional way.
Rule for sending operator \tsend ensures that the type of the cache channel for sending expression $w$
is to be the least upper bound of the type of $w$ and the counter level,
and rule for receiving operator \trecv ensures that the type of the receiving variable $x$
is to be the least upper bound of the type of channel and the counter level.
Rule for branch event \tbranch specifies that the typing environment is required to be enforced
in terms of the branch body with the counter level being 
the least upper bound of the current counter level and the type of the guard.
Rule for loop operator \tloop calculates the least fixed point of the type environment transition function on the security lattice.
Rule for parallel operator \tpar ensures that the type of each information container
(including process variables and cache lines)
is to be the least upper bound after the execution of each parallel process.

In overall, the derivation rule $(\tau, \omega_I, \omega_H) \vdash \Xi \{P\} \Xi'$ ensures that:
\begin{itemize}
\item variables and cache lines whose final types in $\Xi'$ are less than $\tau$ must not be changed by $P$;
\item the final value of a variable (or a cache line) say $x$ whose final type is $\Xi'(x)=t$ must not depend on the initial values of those variables (or cache lines) say $z$ whose initial type $\Xi(z)$ is greater than $t$.
\item processes (c.f. instances) belonging to a higher order category instances (c.f. host)
must not move to an instance (c.f. a host) with a lower order category.
\end{itemize}
\begin{theorem}[Monotonicity of the type environment transition function]
\label{theo:mono}
Given $\L$ and $P$, for all $\tau$, $\omega_I$, $\omega_H$ and $\Xi$, 
the type environment transition function: 
$\T_{P,\L} (\Xi,(\tau,\omega_I,\omega_H)) \mapsto \Xi'$ 
regarding $(\tau,\omega_I,\omega_H) \vdash \Xi \{P\} \Xi'$
is monotone.
\end{theorem}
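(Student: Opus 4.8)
The plan is to proceed by structural induction on the process term $P$, showing that each typing rule in Table~\ref{tbl:rules} defines a monotone step in both arguments, so that their composition (which is what $\T_{P,\L}$ amounts to) is monotone. Concretely, fix $\L$ and recall that monotonicity means: if $(\tau_1,\omega_{I,1},\omega_{H,1}) \sqsubseteq (\tau_2,\omega_{I,2},\omega_{H,2})$ (componentwise in $\L\times 2^{\Omega_I}\times 2^{\Omega_H}$) and $\Xi_1 \sqsubseteq \Xi_2$ in the sense of Definition~\ref{def:sec-env}, then the resulting environments satisfy $\T_{P,\L}(\Xi_1,(\tau_1,\omega_{I,1},\omega_{H,1})) \sqsubseteq \T_{P,\L}(\Xi_2,(\tau_2,\omega_{I,2},\omega_{H,2}))$. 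The base cases are the atomic rules. For \tskip the output is the input, so monotonicity is immediate. For \tstop the output sets every cache line in $\CLines_P$ to $\bot$, which is order-preserving since it is a constant map below everything, and leaves the rest of $\Xi$ untouched. For \tass, \tsend, \trecv the update has the shape $\Xi'(x \mapsto_{\tau_v} \tau \sqcup t)$ where $t$ is the type of an expression or channel under $\Xi$; here I would use two sublemmas: (i) the expression-typing map $\Xi \mapsto (\Xi \vdash \exp :_{\tau_v} t)$ is monotone in $\Xi$, which follows because $t = \bigsqcup_{x\in\mathtt{fv}(\exp)}\Xi(\tau_v(x))$ and finite joins are monotone; and (ii) the pointwise-override operation $\Xi'(x\mapsto_{\tau_v} v)$ is monotone jointly in $\Xi'$ and $v$. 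Since $\sqcup$ is monotone in both $\tau$ and $t$, the composite is monotone. For \tmove (both the process and the instance version) the argument is the same, just applied to the finite family of variables, cache lines, and the category component, using that $\bigsqcup\{\omega,\omega',\omega_I\}$ is monotone in each entry.

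For the inductive cases I would argue as follows. For \tseq, $\T_{P;Q,\L}(\Xi,\bar\tau) = \T_{Q,\L}(\T_{P,\L}(\Xi,\bar\tau),\bar\tau)$, and the composition of monotone functions is monotone, using the induction hypothesis for $P$ and for $Q$ (note the counter $\bar\tau$ is threaded unchanged, so both uses of the IH apply directly). For \tpar the output is $\Xi'''$ obtained by taking, pointwise on variables and cache lines, the join $t'_x \sqcup t''_x$ of the types produced by the two branches' environments $\Xi'$ and $\Xi''$; by the IH both $\Xi \mapsto \Xi'$ and $\Xi \mapsto \Xi''$ are monotone, reading off a component's type is monotone in the environment, and $\sqcup$ is monotone, so the whole thing is. For \tbranch the counter passed to the sub-derivations is $t \sqcup \tau$ where $t$ is the type of the guard $b$; since $b$'s type is monotone in $\Xi$ (sublemma (i) above) and $\sqcup$ is monotone, the pair $(t\sqcup\tau,\omega_I,\omega_H)$ grows monotonically with $(\Xi,\bar\tau)$, so by the IH $\Xi'_P$ and $\Xi'_Q$ grow monotonically, and then $\Xi' = \Xi'_P \sqcup \Xi'_Q$ does too.

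The one genuinely delicate case — and the step I expect to be the main obstacle — is \tloop, whose conclusion is defined as a least fixed point: $\Xi_0 = \Xi$, $\Xi_{i+1} = \Xi'_i \sqcup \Xi$ where $(t_i \sqcup \tau,\omega_I,\omega_H)\vdash \Xi_i\{P\}\Xi'_i$, and the output is the stabilised value $\Xi'_n$. Here I would first note that $\L$ (hence the product lattice of environments, which has finite height since $\var$, $\CLines$, $\Omega_I$, $\Omega_H$ are finite) is a complete lattice, so the iteration does reach a fixed point after finitely many steps; this is where I would lean on the earlier remark that the rule "calculates the least fixed point of the type environment transition function on the security lattice." Then the argument is a standard one about Kleene iteration: the one-step map $F_\Xi(\Psi) = \big(\text{the }\Xi'\text{ with }(t_\Psi\sqcup\tau,\omega_I,\omega_H)\vdash \Psi\{P\}\Xi'\big)\sqcup\Xi$ is monotone in $\Psi$ (by the IH for $P$, plus the guard-type and join sublemmas) and monotone in the parameters $(\Xi,\bar\tau)$; monotonicity of a parameterised family of monotone maps transfers to their least fixed points, i.e. $\Xi \sqsubseteq \bar\Xi$ and $\bar\tau \sqsubseteq \bar\tau'$ imply $\mathrm{lfp}(F_{\Xi,\bar\tau}) \sqsubseteq \mathrm{lfp}(F_{\bar\Xi,\bar\tau'})$, which is exactly what is needed. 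The care required is just in setting up $F$ so that the dependence of the inner counter $t_i \sqcup \tau$ on the current iterate is visibly monotone, and in checking that $\Xi \sqsubseteq \Xi_1 \sqsubseteq \Xi_2 \sqsubseteq \cdots$ so the "stabilised value" $\Xi'_n$ is well-defined and equals the least fixed point; both follow from the sublemmas already collected. Finally, \tsub does not define a step of $\T_{P,\L}$ (it is a coercion on derivations, not a syntactic constructor) and so needs no separate case.
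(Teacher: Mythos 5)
Your proposal is correct and follows essentially the same route as the paper: structural induction on $P$, with each typing rule checked to be a monotone step, and the \tloop{} case handled by observing that the iterates form an ascending chain in the finite lattice so the rule computes a least fixed point whose dependence on the parameters is itself monotone. Your version merely spells out the sublemmas (monotonicity of expression typing, of pointwise override, and of joins) that the paper's two-sentence proof leaves implicit.
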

\begin{proof}
The proof is obtained by induction on the semantic structure of the process.
Particularly, for the case of loop, 
the sequences of $\Xi'_0$, $\Xi'_1$, $\Xi'_2$, $\dots$
and $\Xi''_0$, $\Xi''_1$, $\Xi''_2$, $\dots$
form ascending chains with a least upper bound due to the finiteness of $\L$,
so rule \tloop calculates the least fixed point of $\T_{P,\L}$ on $\L$.
\hfill $\Box$
\end{proof}
\begin{definition}[Semantic flow security condition]
\label{def:semantic-sec-cond}
We say the semantic relation of $P$ satisfies $(\tau,\omega_I,\omega_H)$-flow security property (denoted by $\phi_{\tau,\omega_I,\omega_H}$),
written as: 
$(\Xi \{P\} \Xi') \models \phi_{\tau,\omega_I,\omega_H}$,  iff:
\begin{itemize}
\item [i)] for all $\Gamma$, $\Gamma'$, $x$ and $l$:
\[
\langle P, \Gamma \rangle \Downarrow \Gamma' 
~\land~ \Xi' \sqsubset (\tau,\omega_I,\omega_H)
~\Rightarrow~ 
\Gamma(\sigma(x)) = \Gamma'(\sigma(x))  ~\land~ 
\Gamma(\delta(l)) = \Gamma'(\delta(l));
\]
where we abuse notation $\Gamma(\sigma(x))$, $\Gamma(\delta(l))$ 
to denote $\sigma(x)$, $\delta(l)$ in configuration $\Gamma$.
\item [ii)] and for all $t \in \L$, $o \in \Omega_I$, $o' \in \Omega_H$, $\Gamma_1$ and $\Gamma_2$:
\[
\Gamma_1 =_{\Xi, (t, o, o')} \Gamma_2 
~\Rightarrow~ 
\Gamma'_1 =_{\Xi, (t, o, o')} \Gamma'_2.
\]
\end{itemize}
\end{definition}
We say the flow security type system is sound 
if the well-typed system is flow secure, more precisely,
whenever $(\tau,\omega_I,\omega_H) \vdash \Xi \{P\} \Xi'$ then
the semantic relation of $P$ is flow secure.
\begin{theorem}[Soundness of the flow security type system]
\label{theo:soundness}
The type system proposed in Table~\ref{tbl:rules} is sound, \ie
\[
(\tau,\omega_I,\omega_H) \vdash \Xi \{P\} \Xi' \Rightarrow (\Xi \{P\} \Xi') \models \phi_{\tau,\omega_I,\omega_H}.
\]
\end{theorem}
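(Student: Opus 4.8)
The plan is to prove soundness by structural induction on the process $P$, showing that the derivation $(\tau,\omega_I,\omega_H)\vdash \Xi\{P\}\Xi'$ entails both clauses of the semantic security condition $\phi_{\tau,\omega_I,\omega_H}$ of Definition~\ref{def:semantic-sec-cond}. First I would set up the induction so that the hypothesis covers \emph{all} choices of counter type $(\tau,\omega_I,\omega_H)$ and both endpoints $\Xi,\Xi'$ simultaneously, since the typing rules for \tbranch, \tloop\ and \tseq\ invoke the judgement on subprocesses at a raised counter level ($t\sqcup\tau$) and with intermediate environments. For clause (i) --- that containers whose final type in $\Xi'$ is strictly below $(\tau,\omega_I,\omega_H)$ are left unchanged --- the key observation is that every typing rule that writes a container $x$ (namely \tass, \trecv, \tsend, \tmove) sets its type to the least upper bound with $\tau$ (resp.\ joins the category with $\omega_I,\omega_H$ for the mobility rules), so a container written by $P$ necessarily has final type $\sqsupseteq(\tau,\omega_I,\omega_H)$; contrapositively, final type below $(\tau,\omega_I,\omega_H)$ forces the container untouched. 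For \tstop\ the cache lines of $P$ are flushed to $\bot$, which is consistent only because \tstop\ is typed at counter level $\bot$, so the premise $\Xi'\sqsubset(\tau,\omega_I,\omega_H)$ of clause (i) is vacuous there.

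For clause (ii) --- preservation of $(t,\omega_I,\omega_H)$-equivalence of configurations across execution --- I would argue rule by rule. The atomic cases (\tass, \trecv, \tsend) follow from the expression-typing convention $\Xi\vdash \exp:_{\tau_v}t$ with $t=\bigsqcup_{x\in\mathtt{fv}(\exp)}\Xi(\tau_v(x))$: if the written container is observable at level $(t,o,o')$ then so is the join $\tau\sqcup t_{\exp}$, hence every free variable of $\exp$ is observable, hence the two runs agree on $\exp$'s value and the updated stores/caches stay equivalent. The mobility rules \tmove\ are where the category components $\omega_I,\omega_H$ do real work: moving $P$ to instance $i'$ joins $i$'s category with $\omega'$ (the target) and the counter category $\omega_I$, which is exactly the monotone update that keeps $\beta_i(i_1)\sqsubseteq\beta_i(i_2)$ stable across the two runs and forbids a downward move leaking the guard; I would spell out that the instance-move rule is reducible to the process-move rule applied uniformly to $P_1,\dots,P_n$, so only the process case needs detailed treatment. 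The compound cases \tseq\ and \tpar\ are immediate from the induction hypothesis by composing/combining the equivalences (using that $\sqcup$ of equivalent environments is equivalent); \tsub\ follows because weakening $\tau$ downward and $\Xi$ downward, $\Xi'$ upward only makes the security conditions easier to satisfy.

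The two genuinely delicate cases are \tbranch\ and \tloop, and the branch case is the heart of the argument. When the guard $b$ has type $t$ with $\Xi(t)\not\sqsubseteq(t',o,o')$ --- i.e.\ the observer cannot see the guard --- the two runs may take different branches, and I must show the resulting configurations are still $(t',o,o')$-equivalent; this is precisely where typing both $P$ and $Q$ at the \emph{raised} counter level $t\sqcup\tau$ pays off, because by clause (i) applied at that level, neither branch can modify any container observable at $(t',o,o')$ (such a container would need final type $\sqsupseteq t\sqcup\tau\sqsupseteq t\not\sqsubseteq(t',o,o')$, a contradiction), so both branches leave the observable projection of $\Gamma$ fixed and equal. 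When the observer \emph{can} see $b$, both runs take the same branch and the induction hypothesis closes the case directly; the final environment $\Xi'=\Xi'_P\sqcup\Xi'_Q$ is a sound over-approximation of either outcome. For \tloop\ I would unfold via the Loop semantic rule into a branch over $(P; b\rhd(P)^*)$ versus \iskip, and invoke Theorem~\ref{theo:mono}: the ascending chain $\Xi_0,\Xi_1,\dots$ stabilises at the least fixed point $\Xi'_n$ by finiteness of $\L$, so the type computed by \tloop\ is a genuine invariant of every iteration, and the branch argument above applies at each unfolding. The main obstacle I anticipate is making the termination/finiteness bookkeeping in \tloop\ rigorous --- in particular arguing that the \emph{same} $n$ bounding the type chain suffices uniformly across the two runs being compared (even when the runs have different lengths), which is really what the weak-bisimulation timing condition in Definition~\ref{def:simulation} and the fixed-completion-time communication rule~(1) are there to guarantee; I would handle this by observing that the fixed point $\Xi'_n$ is reached independently of the runtime control flow, so it bounds the observable behaviour regardless of how many times either run actually iterates.
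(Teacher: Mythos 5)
Your proposal is correct and follows essentially the same route as the paper's own (much terser) argument: structural induction on the typing derivation, with clause (i) established by the confinement property that every write joins the container's type with the counter level $\tau$, and the high-guard branch case of clause (ii) discharged by applying that confinement property at the raised level $t\sqcup\tau$. The paper states these two facts and appeals to induction on the derivation tree without further detail, so your write-up is a faithful (and considerably more explicit) elaboration of the intended proof.
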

\begin{proof}
Since $\tau \vdash \Xi \{P\} \Xi'$ ensures that:
for all variables and cache lines $x$,
whenever $\Xi'(x) \sqsubset \tau$
then there are no assignments to $x$ in $P$,
and thus i) is clearly satisfied; 
and for any variables $y$, 
such that $\Xi(y) \sqsupset \Xi'(x)$,
the final value of $x$ does not depend on 
the initial values of $y$,
we can then prove for all $t\in \L$, $o \in \Omega_I$, $o' \in \Omega_H$, 
configurations $\Gamma$ and $\Gamma'$ such that 
$\langle P, \Gamma \rangle \Downarrow \Gamma'$, 
whenever $\Gamma_1 =_{\Xi, (\tau,\omega_I,\omega_H)} \Gamma_2 $
then $\Gamma'_1 =_{\Xi, (\tau,\omega_I,\omega_H)} \Gamma'_2$, 
by induction on the structure of the derivation tree.
 \hfill $\Box$
\end{proof}

\begin{theorem}[Flow secure of communications]
\label{theo:capacity}
Given a \seccloud model $\mathbf A$, 
for all $P$ running upon any VM $i$ of any host $h$ in $\mathbf A$,
if $(\Xi \{P\} \Xi') \models \phi_{\tau,\omega_I,\omega_H}$,
then $\mathbf A$ is weak cache flow secure with $L=\tau$.
\end{theorem}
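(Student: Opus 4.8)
The plan is to unfold the definition of \emph{weak cache flow secure} (Definition~\ref{def:policy}) and to establish its two constituents separately, for an arbitrary component $P$ of $\mathbf A$ and an arbitrary pair of timed runs $\lambda$ and $\lambda'$ of $P$ whose initial configurations satisfy $\Gamma_0 =_{\Xi,(L,\omega_I,\omega_H)} \Gamma'_0$: namely (a) equivalence of the final configurations, $\Gamma =_{\Xi,(L,\omega_I,\omega_H)} \Gamma'$, and (b) equality of the accumulated elapsed times, $\sum_{j=0}^{n}\Delta t_j = \sum_{j=0}^{n}\Delta t'_j$. Since $\mathbf A$ is obtained by the standard decomposition into components $h_k:i_{kj}:[\![P_{kj}]\!].M_{kj}$ and every such $P_{kj}$ is assumed to satisfy $\phi_{\tau,\omega_I,\omega_H}$, it suffices to reason about one component at a time.

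For (a) I would observe that a timed run of $P$ from $\Gamma_0$ to $\Gamma$ witnesses the big-step evaluation $\langle P,\Gamma_0\rangle \Downarrow \Gamma$, and likewise $\langle P,\Gamma'_0\rangle \Downarrow \Gamma'$; instantiating clause (ii) of the semantic flow security condition (Definition~\ref{def:semantic-sec-cond}) with $t=\tau=L$, $o=\omega_I$ and $o'=\omega_H$ then yields exactly $\Gamma =_{\Xi,(L,\omega_I,\omega_H)} \Gamma'$. This already subsumes the instance- and host-ordering conjuncts of configuration equivalence, since those are part of $=_{\Xi,(L,\omega_I,\omega_H)}$ and hence preserved by $\phi$; well-formedness of $\mathbf A$ (moves only ever raise categories, by rule \tmove) keeps them consistent along the run.

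For (b) I would argue, by induction on the syntactic structure of $P$, that the total elapsed time of a run of $P$ is a function of the $(L,\omega_I,\omega_H)$-low projection of its starting configuration alone. The backbone is that the only delayable behaviours are $\sleep$ and cache-related communications: the non-cache atomic events ($\istop$, $\iskip$, assignment, move and guard evaluation) are non-delayable and contribute $0$; a $\sleep$ contributes its argument, which is data-independent; and in a well-formed \seccloud model every communication occurs only inside the derived construct $\sleep(T-t)\to P \lhd (t:=\Delta t(a!w \Vert a?x)<T) \rhd \istop$ of communication rule~(1), whose total contribution is the constant $T$, irrespective of how the measured time $t$ fluctuates with high data. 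Sequential and component-internal parallel composition then express the total time as a sum of such contributions, to which the induction hypothesis applies (cases \tseq and \tpar); for $P \lhd b \rhd Q$ with a low guard (\ie $\Xi \vdash b :_{\tau_v} t_b$ with $t_b \sqsubseteq L$) and for a loop with a low guard, $\lambda$ and $\lambda'$ take the same arm, respectively perform the same number of iterations, and the hypothesis closes the case. Combining (a) and (b) gives $\lambda\ {\buildrel\rm w\over\sim_{\Xi,(L,\omega_I,\omega_H)}}\ \lambda'$, and hence $\mathbf A$ is weak cache flow secure with $L=\tau$.

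The delicate point — and the step I expect to cost the most effort — is the case of a \emph{high}-guarded branch or loop: the condition $\phi_{\tau,\omega_I,\omega_H}$ constrains only the data part of configurations, not timing, so to rule out a residual timing channel through two arms of unequal timed footprint one must either impose a well-formedness requirement on $\mathbf A$ (that the two arms of a high branch contain the same number of rule-(1) communications and of data-independent $\sleep$ delays, and that no loop carries a high guard) or strengthen the induction to carry, alongside the low-configuration invariant, a \emph{low timing-footprint} invariant asserting that the multiset of delayable sub-events traversed by a run depends only on the low projection of the store and cache. Checking that this strengthened invariant is preserved by every typing rule — in particular \tbranch and \tloop, and compositionally \tseq and \tpar — is the technical heart of the argument; the remaining cases are routine structural bookkeeping, and the only other auxiliary fact needed is the straightforward observation that timed runs witness the big-step relation $\Downarrow$ used in~(a).
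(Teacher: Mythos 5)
Your proposal matches the paper's proof in its essential structure: the same two-part split into (a) equivalence of final configurations, obtained from the semantic flow security condition (the paper routes this through soundness of the type system, Theorem~2, but the content is the same), and (b) equality of elapsed times, obtained from the fixed completion time $T$ enforced by the communication scheme in rule~(1). The paper's own proof is only a two-sentence sketch of exactly this decomposition; the ``delicate point'' you flag --- that $\phi_{\tau,\omega_I,\omega_H}$ constrains only data and not timing, so high-guarded branches and loops could still leak through unequal timed footprints --- is a genuine gap the paper silently glosses over, and your proposed remedies (a well-formedness restriction on $\mathbf{A}$ or a strengthened low timing-footprint invariant) supply detail the paper does not.
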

\begin{proof}
The proof includes two parts in terms of the definition of weak cache flow security policy given in Definition~\ref{def:policy}:
channel of timing observable behaviours of victim process, 
and channel of observable low outputs influenced by high inputs.
The first part is ensured by the semantics of communications specified in (1),
which must be completed in certain time,
and the second part is ensured by the soundness of the type system
presented in Theorem~\ref{theo:soundness}.
\hfill $\Box$
\end{proof}

\begin{example}
\label{eg:type}
Consider again the model presented in Example~\ref{eg:lang} and ~\ref{eg:policy}.
It is clear that $(\tau,\omega_I,\omega_H) \vdash \Xi \{R\} \Xi'$ does not hold since the assignment to $z$ make $\Xi'(z) > \Xi(z) $;
while $\Xi \{R\} \Xi' \models \phi_{\tau,\omega_I,\omega_H}$ holds if the communication between $P$ and $Q$ fails and $\Xi \{R\} \Xi' \not \models \phi_{\tau,\omega_I,\omega_H}$ holds otherwise.
\end{example}
\section{Related Work}
\label{sec:related}

This paper relates to the topic of information flow analysis
in virtualised computing systems from perspective of 
formal languages with a concern of cache timing attacks.

Cross-VM side-channel attacks in virtualised infrastructure allowed the attacker to extract information from a target VM and stole confidential information from the victims~\cite{RistenpartTSS09,WuXW12,ZhangJRR12}.
Over the last decade, there have been sustained efforts in exploring solutions
to defend cache channel attacks in virtualised computing environment via the approaches of 
\emph{cache partition} at either \emph{hardware-level}~\cite{WangL07,WangL08,DomnitserJLAP12,KongASZ13}
or \emph{system-level}~\cite{RajNSE09,ShiSCZ11,KimPM12,ZhouRZ16,LiuGYMRHL16},
and the approaches of 
\emph{cache randomisation} via introducing randomization in cache uses through
either new \emph{hardware design}~\cite{WangL06,WangL07,WangL08,LiuL14,LiuHMHTS15,RenFKDD19}
or \emph{compiler-assistant design}~\cite{GodfreyZ14,LiuHMHTS15,CraneHBLF15,RaneLT15}.
Wang and Lee~\cite{WangL07} identified cache interference as the main cause of 
different types of cache side channel attacks.
Security-aware cache designs were proposed to mitigate the cache interference issue:
Partition-Locked cache (PLcache) eliminated cache interference via preventing cache sharing in a flexible way,
and Random Permutation cache (RPcache) allowed cache sharing but randomised cache interference so that no information can be deduced.
More recently, Liu et. al.~\cite{LiuGYMRHL16} developed CATalyst 
to defend cache-based side channel attacks for the cloud computing system. 
Specifically, CATalyst used Cache Allocation Technology (CAT) on Intel processors 
to partition the last-level cache 
into secure and non-secure partitions. 
The secure partition was loaded with cache-pinned secure pages and was software-managed, 
while the non-secure partition can be freely used by any applications and kept hardware-managed. 
Security-sensitive code and data can be loaded and locked by users via mapping them to the cache-pinned pages.
Thus a hybrid hardware-software managed cache was constructed to protect the sensitive code and data. 
However, these efforts mostly require significant changes to the hardware,
hypervisors, or operating systems, 
which make them impractical to be deployed in current cloud data centres.
Formal treatment on flow security policies upon side-channel attack detection, 
leveraging program analysis techniques and relevant tools are still needed
in order to improve the accuracy and applicability of leakage analysis and control
in the virtualised infrastructure.
In this paper, we address the flow security issue in virtualised computing environment from perspective of \emph{programming language} and \emph{program analysis} techniques.

On the other hand, 
there have been lasting investigations on flow property specification and enforcement via approaches of \emph{formal language and analysis}. 
The conception of information flow specifies the
security requirements of the system where no sensitive information should be released to the observer during its executions.
Denning and Denning~\cite{Denning77} first used program analysis to investigate
if the information flow properties of a program
satisfy a specified multi-level security policy.
Goguen and Meseguer~\cite{GoguenMeseguer82} formalised the notion of absence of information flow with the concept of non-interference.
Ryan and Schneider~\cite{RyanS99} took a step towards the generalisation of a CSP formulation of non-interference to handle information flows through the concept of process equivalence.
Security type systems~\cite{VolpanoS97,Myers99,HondaVY00,Pottier02,HuntSands06,CapecchiCDR10,HuntSands11} had been substantially used to formulate
the analysis of secure information flow in programs.
In addition to type-based treatments of secure information
flow analysis for programs, Clark \emph{et. al}
presented a flow logic approach in~\cite{ClarkHH02},
Amtoft and Banerjee proposed a Hoare-like logic for program dependence in~\cite{AmtoftB04}.
Hammer and Snelting~\cite{HammerS09} presented an approach
for information flow control in program analysis
based on program dependence graphs (PDG).
Based on~\cite{HammerS09},
~\cite{TaghdiriSS10} extended the PDG-based flow analysis
by incorporating refinement techniques via path conditions
to improve the precision of the flow control.
Such PDG-based information flow control is more
precise but more expensive than type-based approaches.
However, none of the above works has addressed the issue of flow analysis in virtualised computing systems regarding observations on both executions of communicating processes and affections of cache timing channels.

\section{Conclusions}
\label{sec:conc}
We have proposed a language-based approach for information leakage analysis and control in virtualised computing infrastructure with a concern of cache timing attacks. 
Specifically, 
we have introduced a distributed process algebra with processes able to capture flow security characters in a virtualised environment;
we have described a cache flow policy for leakage analysis through communication covert channels;
and we have presented a type system of the language to enforce the flow policy.
In our future work, we plan to derive concrete implementation of our approach and to extend the current model to define and enforce more practical security policies.

\bibliographystyle{splncs03} 
\bibliography{BIB-cloud}

\end{document}